  \newtheorem{proposition}{Proposition}
  \newtheorem{lemma}{Lemma}
  \theoremstyle{definition}
  \DeclareMathOperator*{\argmax}{arg\,max}
\pgfplotsset{width=6cm,compat=1.9}
\title{Activating the ``Breakfast Club": Modeling Influence Spread in Natural-World Social Networks}
\author{Lily Hu \textsuperscript{1}, Bryan Wilder \textsuperscript{2},
Amulya Yadav \textsuperscript{2},
Eric Rice \textsuperscript{2},
Milind Tambe \textsuperscript{2}\\
\textsuperscript{1}Harvard University, Cambridge, MA, 02138, USA\\
\{lilyhu\}@g.harvard.edu\\
\textsuperscript{2}University of Southern California, Los Angeles, CA, 90089, USA\\
\{bwilder, amulyaya, ericr, tambe\}@usc.edu
}
\title{Activating the ``Breakfast Club": Modeling Influence Spread in Natural-World Social Networks}
\begin{document}
  \maketitle
  \begin{abstract}
  While reigning models of diffusion have privileged the structure of a given social network as the key to informational exchange, real human interactions do not appear to take place on a single graph of connections. Using data collected from a pilot study of the spread of HIV awareness in social networks of homeless youth, we show that health information \textit{did not} diffuse in the field according to the processes outlined by dominant models. Since physical network diffusion scenarios often diverge from their more well-studied counterparts on digital networks, we propose an alternative Activation Jump Model (AJM) that describes information diffusion on physical networks from a \textit{multi-agent team} perspective. Our model exhibits two main differentiating features from leading cascade and threshold models of influence spread: 1) The structural composition of a seed set team impacts each individual node's influencing behavior, and 2) an influencing node may spread information to non-neighbors. We show that the AJM significantly outperforms existing models in its fit to the observed node-level influence data on the youth networks. We then prove theoretical results, showing that the AJM exhibits many well-behaved properties shared by dominant models. Our results suggest that the AJM presents a flexible and more accurate model of network diffusion that may better inform influence maximization in the field.

\end{abstract}
\section{Introduction}

Research in influence maximization lies at the crossroads of the \textit{who}, \textit{what}, and \textit{when} of information diffusion. In their seminal paper, Kempe, Kleinberg, and Tardos (\citeyear{kempe2003maximizing}) formalized the problem by imagining influencing agents as seed nodes in a network initialized to propagate information first to neighbors and eventually throughout the network as influence spreads. Along with theoretical work in the field, the wide availability of large-scale digital data has positioned internet networks---web link traces, email communication, social media platforms---at the center of the discussion of the influence maximization problem. Even so, empirical work that compares predictions by the dominant Independent Cascade (ICM) and Linear Threshold (LTM) models with realized diffusion at the \textit{node level} is limited. Prevailing methodologies for estimating diffusion model parameters often achieve low accuracy in replicating observed behavior even when applied to well-defined online networks with temporal information flow data \cite{goyal2011data,wang2013learning}. 


Although the ICM and LTM were originally formulated to describe social influence in natural environments \cite{granovetter1978threshold,schelling1978micromotives}, there is a dearth of high-quality data that support the theories on networks in these physical settings. Moreover, multiple challenges exacerbate these deficiencies in physical settings. First, both the ICM and LTM assume that the topology of social ties is identical with the mesh of connective channels through which information spreads. While on many social media platforms a user's social network delimits her space of communication, in the natural world, an individual's total space of social navigation dwarfs the space of those she calls her ``friends,'' and there exists a multiplicity of information avenues within the network that do not coincide with one's social ties \cite{shafie2015multigraph}.
Further, previous work has shown that models with strong assumptions about a particular constructed graph topology are more prone to error and inaccuracy in their predictions of information spread \cite{butts2003network,krackhardt1999whether}.

In this paper, we analyze information diffusion data from the first empirical study of influence maximization in the \textit{physical world} \cite{yadav2017influence}, which tracked 173 individuals across 3 distinct networks over a multi-year time period. We delved into this significant corpus of natural world network data to investigate influence spread at an \textit{individual node level} rather than a network-wide volume level. We found that information \textit{did not} diffuse from seed nodes to the greater network according to processes suggested by the ICM or LTM. Most strikingly, we found that across all three networks in the study, \textit{$\it{50\%}$ of informed nodes lacked any path to a seed node} and moreover, a node's degree of connectivity---both generally and specifically to seed nodes---exhibited \textit{no correlation with likelihood of becoming informed}. These results directly contradict predictions put forth by the ICM and LTM and call into question the suitability of these leading models of diffusion for approximating information spread on physical networks.


These negative results against the ICM and LTM can be generalized to apply to other graph-based models of diffusion: The high proportion of informed yet isolated nodes cannot be explained by models that rely on edge-based propagation as the sole avenue for influence spread \cite{degroot1974reaching}. On the other hand, more flexible approaches such as Hawkes contagion processes \cite{zhou2013learning}, which allow for influence to spread over greater social distances, cannot explain the differing spread outcomes across the networks. In general, appealing only to non-graph features of the diffusion scenarios cannot reconcile the divergences in observed information diffusion in the three different Yadav et al. studies. In response to the shortcomings of existing models, we develop a new model of \textit{targeted peer-to-peer information spread on natural networks} that does not rely on strong tie assumptions and instead incorporates an understanding of influencing as a  ``team'' behavior.

Our model features two distinguishing characteristics that are aligned with real-world information diffusion in physical spaces: 1) Nodes exchange information beyond their immediate social ties, and 2) Seed nodes act as a \textit{multi-agent team} to spread information, where their overall influencing efficacy is a function of both individual and team attributes. In the proposed \textit{Activation Jump Model} (AJM), team-based influence spread in a network is driven by activating the ``Breakfast Club," where individuals from different social contexts band together for a common cause and form a united team for information diffusion. These features confer a flexibility to our forecasts of information flow and allow our model to \textit{achieve a 60\% to 110\% improvement over the best ICM and LTM predictions in its predictions of which nodes will be influenced}.

We also point to a methodological pitfall of research in influence maximization that focuses solely on achieving a particular level of information diffusion within a network. Namely, matching magnitude of influence spread under simulations to observed influence spread is insufficient evidence for determining the underlying diffusion process. We show that optimal seeding under one model achieves \textit{near-optimal} ($> 90\%$) influence spread under another diffusion process on three natural-world networks. In fact, \textit{any} magnitude of influence spread can be explained by varying ICM and LTM parameters, pointing to a fundamental ambiguity in identifying the true diffusion process based on this metric alone. 

Even when seeding strategies achieve high levels of influence spread, leading models' failures to predict node-level influence can limit their applicability. In domains of sustainability, network interventions can generate knowledge or promote behavioral changes within a community. These programs typically identify individuals and groups that may especially benefit from the intervention. For example, school network-based suicide prevention programs aim to increase general awareness about signs of suicidal behavior but especially seek to reach high-risk adolescents and their social circles \cite{kalafat1994evaluation}. Similarly, peer-led HIV prevention programs akin the to the fieldwork by Yadav et al. hope to reach a diverse set of individuals but especially those who participate in risky behaviors \cite{broadhead1998harnessing}. As such, most social interventions have the dual purpose of maximizing influence coverage while also targeting vulnerable individuals. With its superior performance in predicting node-level influence, the AJM may serve as a more desirable framework to guide these intervention strategies in the field.

\section{Network Intervention Data Analysis}

\subsubsection{Pilot Study Procedure}
Yadav et al.'s long-standing collaboration with homeless youth service providers in a large urban area sought to improve peer-led heatlh interventions by leveraging research in influence maximization (\citeyear{yadav2017influence}). To this end, they conducted a series of head-to-head comparison studies of seeding strategies to select cohorts of \textit{Peer Leaders} among the youth that would be trained for the task of HIV awareness diffusion in their communities. Three studies took place on three distinct social networks of homeless youth. Each study recruited youth and gathered social network data using online contacts, field observations, and surveys. A different seeding strategy was then deployed on each of the generated networks: In two of the pilot studies, Peer Leaders were chosen via two algorithmic agents for influence maximization, HEALER and DOSIM, which were designed to optimize network-based intervention strategies for health providers. The third network was seeded via degree centrality (DC), the most commonly-used heuristic in network interventions \cite{valente2012network}, such that the most popular youth were chosen as Peer Leaders. 

Each network's Peer Leaders underwent an intensive training course led by pilot study staff that served to both instruct the youth in spreading information about HIV to their peers as well as bind the members together in their shared roles as health ambassadors. After Peer Leaders were sent out into the field, youth were asked in 1-month and 3-month follow-up surveys about whether they had received information about HIV from a Peer Leader. These responses revealed the extent to which information had spread from seed nodes to the greater network. 
The post-intervention results revealed that the HEALER and DOSIM seeding strategies resulted in greater informational spread compared to DC, with ${\sim} 74\%$ and ${\sim} 72\%$ respectively of non-Peer Leaders reporting having received information about HIV in the 3-month survey compared to ${\sim} 35\%$ in the control study. Since both HEALER and DOSIM solved the influence maximization problem by assuming a model of information spread based on a generalization of the Independent Cascade, the success initially seemed to validate the model as an accurate approximation of information spread in the physical world. 

\begin{table}[h]
\centering
\caption{Fraction of nodes in each network that were informed, sorted by connectivity status to Peer Leaders. Denominator gives total number of nodes of that type; numerator gives number of those that are informed. Direct nodes have an edge to a PL; indirect nodes are connected via intervening neighbor(s); isolated nodes lack a path to any PL.}
\label{table:overview}
\scalebox{0.9}{
\begin{tabular}{ccccc|c}
\textbf{Network} & $\bm{n}$ & \textbf{Direct} & \textbf{Indirect} & \textbf{Isolated} & \begin{tabular}[c]{@{}c@{}}\textbf{Proportion}\\ \textbf{Informed}\end{tabular} \\ \hline
HEALER & 34 & 15/21 & 4/7 & 6/6 & 25/34 \\ \hline
DOSIM & 25 & 5/6 & 5/10 & 8/9 & 18/25 \\ \hline
DC & 26 & 5/12 & 1/4 & 3/10 & 9/26 \\ \hline
\end{tabular}
}
\end{table}

\subsubsection{Node-level analysis of information diffusion}
However, the empirically observed \textit{node-level} patterns of information spread in the three networks wildly diverged from Independent Cascade and Linear Threshold predictions. Table \ref{table:overview} gives an overview of the connectivity of nodes that reported receiving information about HIV from a Peer Leader in the 3-month follow-up survey.
In each of the three networks, nodes lacking a path to any seed Peer Leaders---denoted as ``isolated" in the table--- represented a high proportion of all nodes that were informed.
Notably, in both the HEALER and DOSIM interventions, isolated youth were informed at a rate \textit{higher} than even those youth who were directly connected to one or more Peer Leaders, with 100\% (6/6 in HEALER) and 89\% (8/9 in DOSIM) informed compared to ${\sim}71\%$ (15/21) and ${\sim}83\%$ (5/6). In the DC network, the effect is less pronounced, though isolated nodes were still informed at a rate comparable to the general non-Peer Leader population (30\% compared to ${\sim}35\%)$. Nonetheless within the context of the ICM and LTM, such nodes have a $0$ probability of receiving information. Thus, these results immediately challenge the claim that existing ties are the dominant avenues of informational exchange and also call into question the premise that information radiates out from seed nodes first to neighbors and then to the rest of the network.

In order to more finely assess the effect that a node's connectivity had on its likelihood of receiving HIV information, we calculated Pearson correlation coefficients between two degree measures and a node's final information status. Our results in Table \ref{table:corr} show that all such correlations are not significantly different from no correlation, thus indicating that connectivity has \textit{no bearing} on likelihood of being influenced. This stands in contrast to prevailing models, in which a node's edges represent its ``opportunities" to receive information, and thus both Peer Leader degree---the number of ties a node has to Peer Leaders---and total degree should be strictly positively correlated with becoming informed.

\begin{table}[ht]
\centering
\caption{Pearson correlation coefficients $r$ between nodes' influence statuses and their Peer Leader (PL) and total degree. PL degree counts edges to PLs; total degree counts edges to all nodes. Positive (negative) $r$ implies a positive (negative) relationship between reporting hearing HIV information and degree. 95\% confidence intervals are also given.}
\label{table:corr}
\scalebox{0.9}{
\begin{tabular}{cccc}
\textbf{Network} & \multicolumn{1}{c}{$\bm{n}$} &\textbf{PL Degree} & \textbf{Total Degree}\\ \hline
HEALER & 34 & \begin{tabular}[c]{@{}c@{}}-0.0685\\ (-0.397, 0.276)\end{tabular} & \begin{tabular}[c]{@{}c@{}}-0.1867\\ (-0.494, 0.162)\end{tabular}\\ \hline
DOSIM & 25 & \begin{tabular}[c]{@{}c@{}}0.1418\\ (-0.268, 0.508)\end{tabular} & \begin{tabular}[c]{@{}c@{}}-0.0290\\ (-0.419, 0.370)\end{tabular} \\ \hline
DC & 26 & \begin{tabular}[c]{@{}c@{}}0.1547\\ (-0.247, 0.511)\end{tabular} & \begin{tabular}[c]{@{}c@{}}0.1304\\ (-0.271, 0.493)\end{tabular}\\ \hline
\end{tabular}
}
\end{table}

The lower contact rate of directly connected nodes and the lack of positive correlation between degree and influence status are even more dissonant with edge-based models of propagation when considered alongside the high levels of influence spread achieved in the studies. In the HEALER network, information was successfully transmitted to ${\sim}74\%$ of all non-Peer Leaders, corresponding to a most likely propagation probability of $p \approx 0.84$. Such a high propagation probability further suggests that Peer Leader-neighboring nodes should be even more heavily favored to receive information, with simulations predicting that \textit{nearly all} (${\sim}99\%$) would become informed, whereas in reality, only ${\sim}71\%$ of these nodes received information. Simulations on the HEALER network with this $p$ value produce correlations of $0.489$ and $0.598$ between degree and likelihood of being informed (PL and total respectively), indicating a moderate to strong positive relationship compared to the actual values of $-0.0685$ and $-0.1867$, which indicate negative to no relationship between degree and influence status. In the DOSIM study, the graph topology itself, with 5 connected components in addition to 7 nodes of degree 0, restricts information spread under the ICM and LTM to maximally reach $68\%$ of all non-Peer Leaders. Even under perfect information propagation, as long as nodes are only able to influence neighbors, simulations under-predict the observed information spread.


While the presence of a single phenomenon such as the activation of a small proportion of isolated nodes could represent mere aberrations of data, these multiple contradictions with prevailing models indicate that the results cannot be dismissed as simply anomalous. Given the unique challenges and complexities of information diffusion on physical networks, we must accept that the data's divergence from predictions by models that have been largely validated only on digital networks is one step in the uncovering and understanding of a qualitatively different influence process. We thus conclude that there is no evidence that a cascade or threshold-like process of information diffusion produced the observed data and move toward developing a new model of influence dynamics on real-world physical networks.


\section{Proposed Model} 
In this section, we introduce a new model of information spread for this class of peer-to-peer diffusion phenomena. 
\subsection{Activation Jump Model}
Beginning with the premise that instances of social exchange are not limited to nodes that share a tie, our model of diffusion does not constrain information flow to the edges in a network. In the \textit{Activation Jump Model} (AJM), influencing agents may leave their immediate social neighborhood to contact and propagate information to other nodes. This action of contacting nodes beyond one's first-order ties is signified as a ``jump." We recognize the heterogeneity of active nodes' social dispositions by differentially modeling each influencer's jump behavior. A seed node's jump activity has two main components: 1) \textit{activation level}, a measure of \emph{how many} other nodes it will attempt to influence, and 2) \textit{landing distribution}, a probability distribution that expresses to \emph{which} inactive nodes it will jump. Together, these two features describe \textit{how often} and \textit{to whom} an influencing agent contacts as she navigates the network to spread information.

Thus the AJM comprises two stages: First, each seed node determines its activation level, giving the number of other nodes to which it will jump. Second, the seed set is deployed in the network, and the social influence process unfolds in time. When a given seed jumps at time $t$, it selects from its landing distribution a target node uninformed at time $t$, modeling the process by which seed nodes seek nodes to inform. Influence is then successfully propagated with probability $p$.

The AJM takes a multi-agent systems approach to the influence maximization problem by constructing a model of node activation that is a function of both individual and ``team" attributes. In contrast to prior models, the seed set is not a collection of independent influencers, rather nodes exhibit behavioral dependencies wherein group dynamics either contribute to or detract from aggregate activation levels.


\subsection{Model Formalization}
While in this paper and all our results, we use a form of the AJM tied to the graph's structural properties, we first discuss the general form of the model to show that it can accommodate a broad range of properties and then discuss our specific form. We return to the generalization in the Discussion.

Consider a team of seed nodes, $S$, tasked with information diffusion on a network $G = (V, E)$. Each seed node draws its activation level, giving the number of jumps it will make, from a distribution that is a function of both the node's \textit{individual} attributes as well as the seed set's \textit{team} attributes. Formally, let $\Delta^{Z}$ be the set of distributions over integers $Z \ge 0$. Each node $v \in V$ is associated with a function $f_v: 2^{|V|} \rightarrow \Delta^Z$ that maps the set of seed nodes to a distribution $A(v,S)$ over discrete activation levels. $A(v,S)$ is a parameterized distribution (e.g., geometric) with mean $\mu_v = h(S)[a^Tx_v]$ where $x_v$ is the node's attribute vector with coefficients $a^T$. Together $a^Tx_v$ represents the particular node's maximum activation level, which is modulated by the team activation level term given by $h(S) \le 1$, a function of the structural positions of nodes in $S$ that captures discomplementarities among team members. Figure \ref{fig:AJM} illustrates this dual---individual and team---composition of a seed node's activation level distribution. 

\begin{figure}[ht]
\includegraphics[scale=0.25]{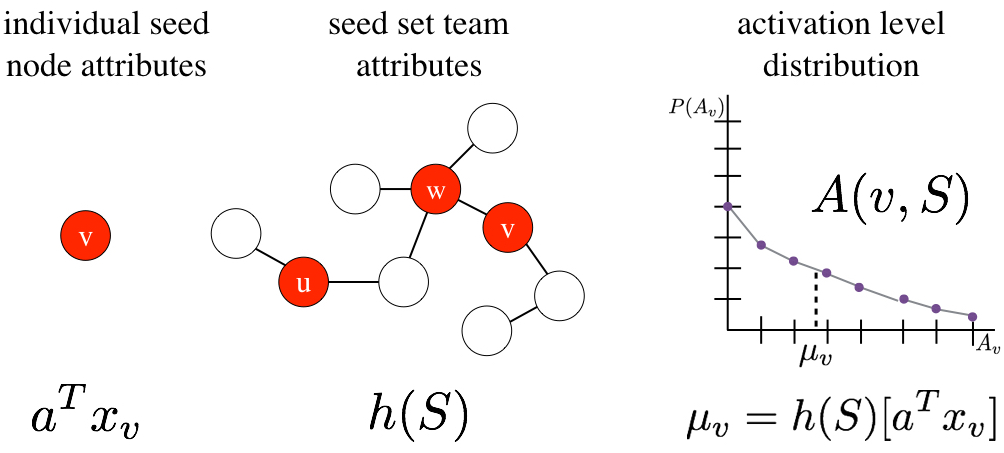}
\caption{In the Activation Jump Model, a seed node $v$ is associated with an activation level distribution that is a function of individual node as well as seed set $S$ attributes.}
\label{fig:AJM}
\end{figure}

Each node $v$ is also associated with a \textit{landing distribution}, $L_{v,T}$, giving the probabilities with which $v$ jumps to a set of potential target nodes $T$. The landing probability is a function of the attributes of the influencing seed $v$ and particular targeted node $u$. Based on these qualities, the node pair is assigned a score $\phi(v, u) \ge 0$, and $L_{v, T}(u) = \Gamma \phi(v, u)$ where $\Gamma$ is a normalization factor such that $\sum_{u \in T} L_{v,T}(u) = 1$. 

We now instantiate the AJM in a specific form that features the concept of ``structural diversity," which highlights groups with members who participate in multiple distinct social contexts. The team thus acts to \textit{unite} otherwise disparate nodes, producing the ``Breakfast Club" effect, which has been shown to be a key determinant of diffusion in networks \cite{ugander2012structural}. Thus we formulate the function 
\begin{align}
h(S) = 1 - \frac{1}{A}\sum_{(u,v) \in E} \mathbbm{1}\left[u, v \in S\right] 
\label{eq:h}
\end{align}
where $A>0$ is a constant equal to seed set size $|S|$. $h$ illustrates the negative effect of social homogeneity in the form of between-seed-node edges on a team's effectiveness. Each pair of connected seed nodes entails a loss of $\frac{1}{|S|}$ of the team's effectiveness. Barring negative influence, $h'(S) = \max(h(S), 0)$ without loss of generality. To reflect the correlation between degree and propensity towards sociality and thus activation, we parsimoniously set $a^Tx_v = deg(v)$. 

The landing distribution score for seed $v$ and target $u$ is given by $\phi(v,u) = \frac{1}{d(v,u)}$, where $d(v,u)$ is the path-length distance between the two nodes. When $d(v,u) = 0$, we set $\phi(v,u) = \epsilon$, where $\epsilon > 0$ is a small constant. As in the ICM, the propagation probability $p$ is able to be varied. 


Information diffusion thus occurs in two stages. First, during the \textit{activation stage} each node in the seed set $v \in S$ is initialized by drawing an activation level $A_v$ from its distribution $A(v,S)$. Then, the \textit{jump stage} unfolds over the time interval $[0, 1]$. Each seed node $v$ draws a series of jump times $t^v_1...t^v_{A_v}$ from the uniform distribution over $[0,1]$. At each jump time $t^v_i$, $v$ jumps to an uninfluenced target node $u$ drawn from $L_{v, T}$ where $T$ the set of uninfluenced nodes at $t^v_i$. Finally, $u$ is successfully influenced with probability $p$. 




\subsection{Model Discussion}
The Activation Jump Model's incorporation of seed set team dynamics follows a line of multi-agent systems research which demonstrates the importance of careful team formation when agents must collaborate to achieve a goal \cite{gaston2003team,gaston2004adapting,liemhetcharat2012modeling,agmon2012leading}. In particular, previous work has focused on the importance of creating a \emph{diverse} team \cite{balch2000hierarchic,hong2004groups,marcolino2013multi}. In the AJM, we computationalize this concept by using the group effectiveness function $h$ to model network structural diversity by penalizing seed sets with many within-team edges. Thus $h \in [0, 1]$ is decreasing in the level of connectivity among seed nodes, and influencing nodes are more active when they occupy distinct neighborhoods of the network rather than when the team is socially homogeneous. 


It is important to note that under this model set-up, a node's marginal effect on the aggregate activation level of a seed set is not guaranteed to be positive. There may exist a node $w \in V$ such that $\sum_{v \in S} h(S)[a^Tx_v] > \sum_{v \in S \cup w} h(S \cup w)[a^T x_v]$, with the effect that the influence function $f(\cdot)$, giving the expected number of influenced nodes, is non-monotone. Although this is a significant departure from the ICM and LTM, we argue that non-monotonicity is a realistic feature of team-based influence spread, since a new seed node may interfere with team dynamics, resulting in a deleterious effect that outweighs its positive individual contribution. This balance between the quantity and quality of members in a seed set is an important consideration in team formation in the real world. As a result, the influence maximization problem under the AJM requires examination of not only a node's individual attributes but also its effect on the group composition of nodes already in the seed set. 



\subsection{Model Validation on Post-Intervention Data}
We evaluate the performance of the Activation Jump Model by comparing its predictions to Yadav et al.'s dataset of HIV awareness spread on three distinct social networks of homeless youth. Standard experiments of diffusion models compare the magnitude of total influence spread under simulations to that observed empirically in order to assess model accuracy. Here, we perform a finer-grained analysis by evaluating and comparing AJM, ICM, and LTM predictions of node-level influence. We treat each model as a binary classifier that outputs the predicted probability of each node becoming influenced. Each model is then evaluated according to its AUROC, a standard measure of classification accuracy.


\noindent\textbf{Parameter settings: }Physical networks present challenges in data collection that limit the ability to view multiple cascades, rendering standard methods of inferring diffusion parameters inoperable. We work under this constraint by fitting the ICM directly to the test data by running simulations under the propagation value that gives its best classification performance and then forcing the AJM to also work under this value. Thus any experimental bias favors the ICM. 

For the AJM, the only parameter we set is the small constant $\phi(v,u) = 0.1$ for the landing distribution score when $d(v,u) = 0$. By contrast, we present the strongest possible version of the ICM for each network, fitting it directly to the test data by selecting the propagation value $p$ that maximized the ICM's AUROC value. We then used this same probability for the AJM. By forcing the AJM to operate under the ICM's optimal parameterization, we ensure that our experiments truly test the AJM's better suitability for modeling the data, rather than a better ability to ``memorize'' the data. 

\begin{figure}
\centering
\begin{tikzpicture}
\begin{axis}[legend entries={AJM, ICM, LTM},
	symbolic x coords = {HEALER, DOSIM, DC},
	xtick=data,
	ylabel=AUROC,
    xlabel=Network,
	enlargelimits=0.15,
    enlarge x limits = 0.22,
	legend style={at={(0.5,-0.32)},
	anchor=north,legend columns=-1},
    nodes near coords,
    nodes near coords align={vertical},
	ybar, area legend
]
\addplot[fill=black] 
	coordinates {(HEALER, 77) (DOSIM, 75) (DC, 61)};
\addplot[pattern=grid] 
	coordinates {(HEALER, 48) (DOSIM, 46) (DC, 29)};
\addplot[pattern=north east lines]
	coordinates{(HEALER, 44)(DOSIM, 34) (DC, 20)};
\end{axis}
\end{tikzpicture}
\caption{Comparison of model classification performances}
\label{fig:ROC-comparison}
\end{figure}
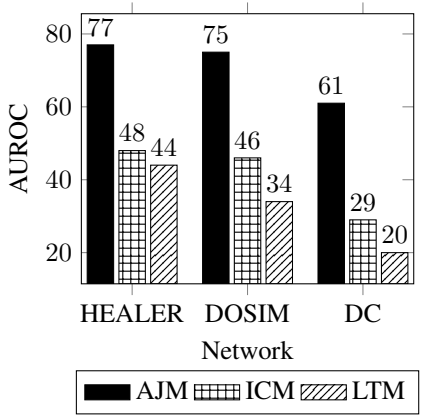

\noindent\textbf{Assessing classification accuracy: }Using selected Peer Leaders in the field experiments as seed nodes, we generated diffusion instances according to the Activation Jump, Independent Cascade, and Linear Threshold models, tracing out Receiver Operating Characteristic (ROC) curves for each set of simulations. This evaluation methodology has been used in previous node-level analyses of information diffusion models \cite{wang2013learning,goyal2010learning} and has been recognized as superior to Precision-Recall curves for the binary classification task \cite{provost1998case}. ROC curves plot a classifier's True Positive Rate (TPR) against its False Positive Rate (FPR) with each point on the curve corresponding to a predictive threshold such that all nodes with a probability of being informed above (below) the threshold are classified as influenced (not influenced). We used the area under the ROC curve (AUROC) to evaluate classification performance \cite{fawcett2006introduction} where an AUROC of $1$ corresponds to a perfect classifier.  

\noindent\textbf{Results: } Each model's AUROC values for the three networks are shown in Figure \ref{fig:ROC-comparison}; the ROC curves for all three models' predictions on the HEALER network are shown in Figure \ref{fig:ROC-HEALER}. The AJM outperforms the ICM and LTM across all networks, with the model achieving accuracies (measured via AUROC) of 77\% and 75\% on the HEALER and DOSIM networks respectively, while the best ICM and LTM issue predictions that, on average, perform worse than a random classifier (20-48\%). For DC, one possible explanation for all three models' lower AUROCs is the overall poor permeation of influence throughout the network, since low base-rates cause the measure to be sensitive to small classification changes. Even so, the AJM is far from a trivial classifier, with an AUROC of 0.61 compared to the ICM and LTM values of 0.29 and 0.20 respectively.
\begin{figure}
\centering
\includegraphics[scale=0.26]{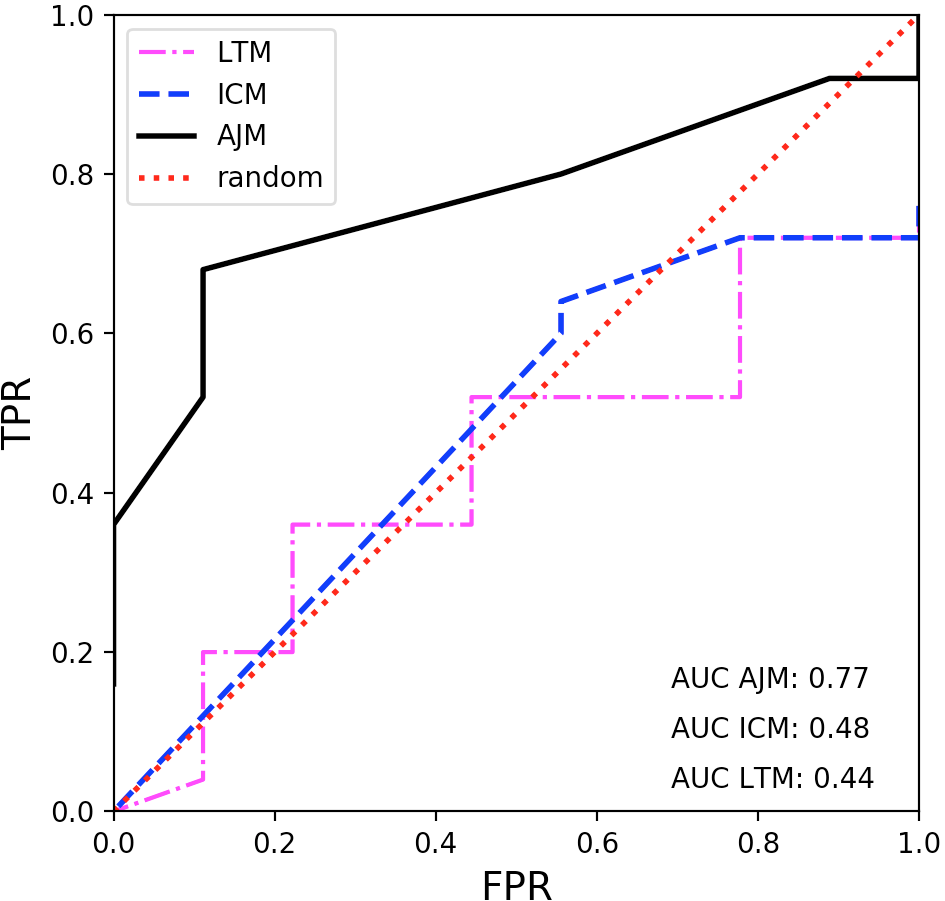}
\caption{Models' ROC curves on HEALER network}
\label{fig:ROC-HEALER}
\end{figure}

\section{The Influence Maximization Problem under the Activation Jump Model}

We now consider the influencing activity of a coordinated multi-agent team under the AJM. Since seed agents do not target nodes that have already been informed, the influence function is captured by the total number of expected jumps, given by $f(S) = h(S)\sum_{v \in S}a^T x_v$, where $h(S)$ follows the form in Equation \ref{eq:h}. We show that under natural conditions, $f$ is a (potentially nonmonotone) submodular function. 




\begin{lemma}
$h$ is monotone-decreasing and submodular. 
\end{lemma}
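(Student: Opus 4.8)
The plan is to prove the two claims—monotone-decreasing and submodular—separately, both by working directly from the explicit form of $h$ given in Equation \ref{eq:h}. The key observation is that $h(S)$ depends on $S$ only through the count of edges internal to $S$. Let me define $e(S) = \sum_{(u,v) \in E} \mathbbm{1}[u, v \in S]$, the number of edges with both endpoints in $S$, so that $h(S) = 1 - e(S)/A$ with $A = |S|$ a fixed constant. Since $h$ is an affine decreasing transformation of $e$, it suffices to show that $e$ is monotone-increasing and supermodular; the sign flip then hands us both desired properties of $h$ at once.

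For monotonicity, I would argue that adding a node $w$ to $S$ can only add edges (those from $w$ to $S$) and never remove any, so $e(S \cup \{w\}) \ge e(S)$, hence $h$ is monotone-decreasing. For the supermodularity of $e$ (equivalently submodularity of $h$), the cleanest route is to compute the marginal gain explicitly. First I would show that the marginal contribution of adding $w$ to $S$ is exactly the number of neighbors of $w$ already in $S$, i.e.
\begin{align}
e(S \cup \{w\}) - e(S) = |\{u \in S : (u,w) \in E\}|.
\label{eq:marginal}
\end{align}
Then for $S \subseteq T$ with $w \notin T$, the neighbors of $w$ lying in $S$ form a subset of those lying in $T$, so the marginal gain of $w$ is \emph{larger} over the bigger set $T$. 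This increasing-marginal-gain property is precisely supermodularity of $e$, and negating yields the decreasing-marginal-gain (submodularity) condition for $h$.

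One genuine subtlety deserves care: the normalizing constant $A = |S|$ appears in $h$, so strictly speaking $h$ is defined relative to a \emph{fixed} ambient seed set size, and comparisons across sets of different cardinality must treat $A$ as a constant rather than as varying with the argument. I would state this explicitly and hold $A$ fixed throughout, so that $h$ is genuinely an affine function of $e$ on the relevant lattice of subsets; this keeps the affine-transformation argument valid. The main obstacle is thus not the combinatorics, which are elementary, but making sure the treatment of $A$ is consistent so that the affine reduction is legitimate—once that is pinned down, supermodularity of the edge-count function is standard and the result follows immediately.
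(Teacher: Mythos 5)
Your proof is correct and follows essentially the same route as the paper's: both arguments reduce to the observation that the marginal change from adding a node counts the edges from that node into the existing set, a quantity that can only grow as the set grows, which gives supermodularity of the internal-edge count and hence submodularity of $h$ after the sign flip. Your explicit factoring through $e(S)$ and your flagging of the need to treat $A=|S|$ as a fixed constant are minor presentational additions (the paper silently makes the same assumption), not a different argument.
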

\begin{proof}
$h$ is monotonically decreasing by inspection. For submodularity, consider the marginal impact of adding a given node $v$ to an existing seed set $S$:

\begin{align*}
h(v|S) = -\frac{1}{A}\sum_{(u, w) \in E} \mathbbm{1} \left[\{u,w\} \not\subseteq S, \{u, w\}\subseteq S \cup \{v\}\right].
\end{align*}

Now consider some $S' \supseteq S$, $v \not\in S'$. The indicator function in the above sum counts edges where one of the two nodes is not contained in $S$, but both \emph{are} contained in $S \cup \{v\}$. If $S'$ extends $S$ but does not contain $v$, then the summation for $h(v|S')$ can only include more nonzero terms than the summation for $h(v|S)$. Since each term is nonpositive, $h(v|S') \leq h(v|S)$. 
\end{proof}

In fact, $h$ being monotone-decreasing and submodular is sufficient for the objective $f$ to also be submodular:

\begin{proposition}
Whenever $h$ is a monotone-decreasing submodular function, $f$ is submodular.
\end{proposition}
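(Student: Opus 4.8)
The plan is to exploit the product structure $f(S) = h(S)\,g(S)$, where $g(S) = \sum_{v \in S} a^Tx_v$ is a nonnegative modular (additive) set function, nonnegative because in our instantiation each $a^Tx_v = \deg(v) \ge 0$. Submodularity of $f$ is equivalent to the marginal returns $f(v \mid S)$ being nonincreasing as the base set grows, so I would first derive a clean closed form for the marginal gain of adding a node $v \notin S$ and then compare that quantity across nested sets $S \subseteq T$.

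First I would compute the marginal. Writing $c_v = a^Tx_v \ge 0$ and $h(v \mid S) = h(S \cup \{v\}) - h(S) \le 0$, and using $g(S \cup \{v\}) = g(S) + c_v$, I expand $f(S \cup \{v\}) - f(S)$ to obtain
\[
f(v \mid S) = h(S \cup \{v\})\,c_v + h(v \mid S)\,g(S).
\]
This decomposition is the crux of the argument: it splits the marginal into an \emph{individual} term scaling the node weight $c_v$ by the new team value $h(S \cup \{v\})$, and a \emph{team-interference} term in which the nonpositive marginal $h(v \mid S)$ multiplies the accumulated weight $g(S)$.

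Then, for $S \subseteq T$ with $v \notin T$, I would establish $f(v \mid S) \ge f(v \mid T)$ by treating the two terms separately. The individual term is nonincreasing because $h$ is monotone-decreasing, giving $h(S \cup \{v\}) \ge h(T \cup \{v\})$, and $c_v \ge 0$. For the team term I would use the chain $h(v \mid S)\,g(S) \ge h(v \mid T)\,g(S) \ge h(v \mid T)\,g(T)$: the first inequality is submodularity of $h$ (so $h(v \mid S) \ge h(v \mid T)$) multiplied through by $g(S) \ge 0$, and the second is monotonicity of $g$ (so $g(S) \le g(T)$) multiplied by the nonpositive factor $h(v \mid T)$, which reverses the inequality. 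Summing the two nonnegative differences yields $f(v \mid S) - f(v \mid T) \ge 0$.

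The main obstacle, and the place to exercise care, is the sign bookkeeping in the team-interference term: $h$ is decreasing, so its marginals are negative, while $g$ is increasing and nonnegative, so the two monotonicities push in opposite directions and combine correctly only because $g \ge 0$ and $h(v \mid T) \le 0$. The nonnegativity of the node weights $c_v = \deg(v)$ is likewise exactly what keeps the individual term monotone, so I would flag that hypothesis explicitly rather than leave it implicit in the claim.
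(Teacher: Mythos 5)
Your proof is correct and follows essentially the same route as the paper: both decompose the marginal gain as $f(v\mid S) = h(S\cup\{v\})\,a^Tx_v + h(v\mid S)\sum_{u\in S}a^Tx_u$ and argue term by term over nested sets. In fact your version is the more careful one --- the paper's write-up appears to swap the justifications for the two terms (attributing the first term's decrease to submodularity and the second to monotonicity, when the individual term needs monotone-decrease of $h$ with $a^Tx_v\ge 0$ and the interference term needs submodularity of $h$ together with $g(S)\le g(T)$ and $h(v\mid T)\le 0$), all of which you handle explicitly and correctly.
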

\begin{proof}
Consider the marginal gain of adding a node $v$ to a given seed set $S$:

\begin{align*}
f(v|S) = h(S \cup \{v\})a^T x_v + \left[h(S \cup \{v\}) - h(S)\right]\sum_{u \in S}a^T x_u
\end{align*}

We prove that $f$ is submodular by showing that each corresponding term in $f(v | S')$ can only decrease for all $S \subseteq S'$. The first term decreases since $h$ is a submodular function as shown in the lemma. The second term, corresponding to the individual contribution of $v$, also decreases because $h$ is monotonically decreasing. Thus $f(v|S') \le f(v|S)$.
\end{proof}

Having shown that $f$ is submodular, a natural approach to seeding would use the greedy algorithm, giving a $1-\frac{1}{e}$ approximation for the ICM and the LTM \cite{kempe2003maximizing,leskovec2007cost,chen2009efficient}. However, since $f$ is non-monotone, this approach does not apply. Instead, we adopt the stochastic greedy algorithm proposed by Feldman, Harshaw, and Karbasi (\citeyear{feldman2017greed}).
Algorithm 1 runs the normal greedy algorithm (lines 4-6) but only selects from a limited set of nodes $P$. Each node is included in $P$ with probability 1/2. This random removal reduces the chance that the greedy method will prematurely commit to a node that later become problematic due to non-monotonicity. Feldman et al. (\citeyear{feldman2017greed}) show that this algorithm obtains a guaranteed $\frac{1}{4}$-approximation to the optimal value and has excellent empirical performance. Our experiments follow their suggested strategy of running the algorithm several times (we both use 4).
\begin{algorithm}
\caption{StochasticGreedyAJM $(V, f, k)$}\label{alg:greedyAJM}
\begin{algorithmic}[1]
\State initialize $S = \emptyset, P = \emptyset$ 
\For{$v \in V$}
\State with \textit{probability} $\frac{1}{2}, P \gets P \cup v$
\EndFor
\While{$|S| < k$ \text{and} $\exists v \in P$ such that $f(S \cup v) \ge f(S)$}
\State $v = \argmax_{v \in P} f(S \cup v) - f(S)$
\State $S \gets S \cup v$
\EndWhile
\State \textbf{return} $S$
\end{algorithmic}
\end{algorithm}

\subsection{Meta-Analysis of Influence Spread Metrics}
The finding that the ICM is a poor predictor of node-level influence is dissonant with the fact that seeding algorithms based on the ICM have proved effective in the field \cite{yadav2017influence}. After all, how can an algorithm based on an inaccurate model of diffusion manage to nevertheless achieve a high level of influence spread? To address this seeming conflict, we confront a larger question about the prevailing methodology of the influence maximization problem. In this section, we show that appealing solely to the magnitude of influence spread achieved is a fundamentally inconclusive method of determining whether a particular diffusion model underlies an observed instance of spread. This ambiguity is problematic when using the influence maximization framework to inform the seeding strategies of network interventions in sustainability domains. In many such cases, in addition to diffusing information generally, programs seek to target particular individuals or groups, and thus a model's ability to make node-level predictions is a valuable asset. 

\textbf{Magnitude of Influence Spread } Previous research comparing information diffusion model predictions to empirical results has tended to rely on metrics related to volume of spread---such as minimizing RMSE as a function of actual spread or recapitulating cascade sizes---to determine the fidelity of a model to ground truth processes \cite{goyal2011data}. However, one cannot extrapolate \textit{processes} from such coarse-grain \textit{outcomes}. The following experiments use three examples of physical, meso-scale networks: \emph{Homeless}, a network of 142 nodes gathered via interviews with homeless youth, \emph{India}, a household-level network gathered from a rural village in India \cite{banerjee2013diffusion}, and \emph{SBM}, a synthetic network of 200 nodes generated via the Stochastic Block Model, which replicates the community structure found in real social networks.

We evaluate how seed sets selected under one diffusion model perform in an influence maximization task under the other models. Figure \ref{fig:cross-validation} examines the consequences of model misspecification for influence maximization. We set the parameters equally across all networks---0.1 for propagation probabilities in the ICM and AJM and edge weights in the LTM. Each table entry shows the percentage of optimal influence spread obtained when a seed set selected according to the model on the column is assessed with the model on the row. For example, the cell (ICM, LTM) indicates that a seed set selected via the greedy algorithm for the LTM produced influence spread that was 99.8\% optimal when diffusion actually occurred under the ICM. Given that all entries are greater than 90\%, determining model fit by examining the magnitude of influence spread achieved under its seeding strategy leads to great ambiguity. Since all of these models result in high influence spread, \emph{any} model could account for the ``true'' underlying process of diffusion. 

\begin{figure}[ht]
\caption{}
    \begin{subfigure}[b]{\columnwidth}
        \centering
			\fontsize{8.0pt}{10.0pt}
		\selectfont
		\begin{tabular}{lccc}
			\hline
			& \multicolumn{1}{l}{ICM} & \multicolumn{1}{l}{LTM} & \multicolumn{1}{l}{AJM} \\ \hline
			ICM & 100                     & 99.8                    & 98.6                    \\
			LTM & 99.6                    & 100                     & 98.8                    \\
			AJM & 97.4                    & 96.1                    & 100                     \\ \hline
		\end{tabular}

			\begin{tabular}{lccc}
		\hline
		& \multicolumn{1}{l}{ICM} & \multicolumn{1}{l}{LTM} & \multicolumn{1}{l}{AJM} \\ \hline
		ICM & 100                     & 98.4                    & 99.8                    \\
		LTM & 99.9                    & 100                     & 98.9                    \\
		AJM & 93.7                 & 97.8                   & 100                     \\ \hline
	\end{tabular}

			\begin{tabular}{lccc}
	\hline
	& \multicolumn{1}{l}{ICM} & \multicolumn{1}{l}{LTM} & \multicolumn{1}{l}{AJM} \\ \hline
	ICM & 100                     & 98.7                    & 99.4                    \\
	LTM & 99.3                    & 100                     & 99.8                    \\
	AJM & 96.3                 & 93.9                  & 100                     \\ \hline
\end{tabular}
        \caption{Percentage of optimal influence spread achieved when network is seeded according to the column model and evaluated according to the row model. Networks top to bottom: Homeless, India, SBM.}
        \label{fig:cross-validation}
    \end{subfigure}%
    \\
    \begin{subfigure}[b]{\columnwidth}
        \centering
        \includegraphics[height=1.05in]{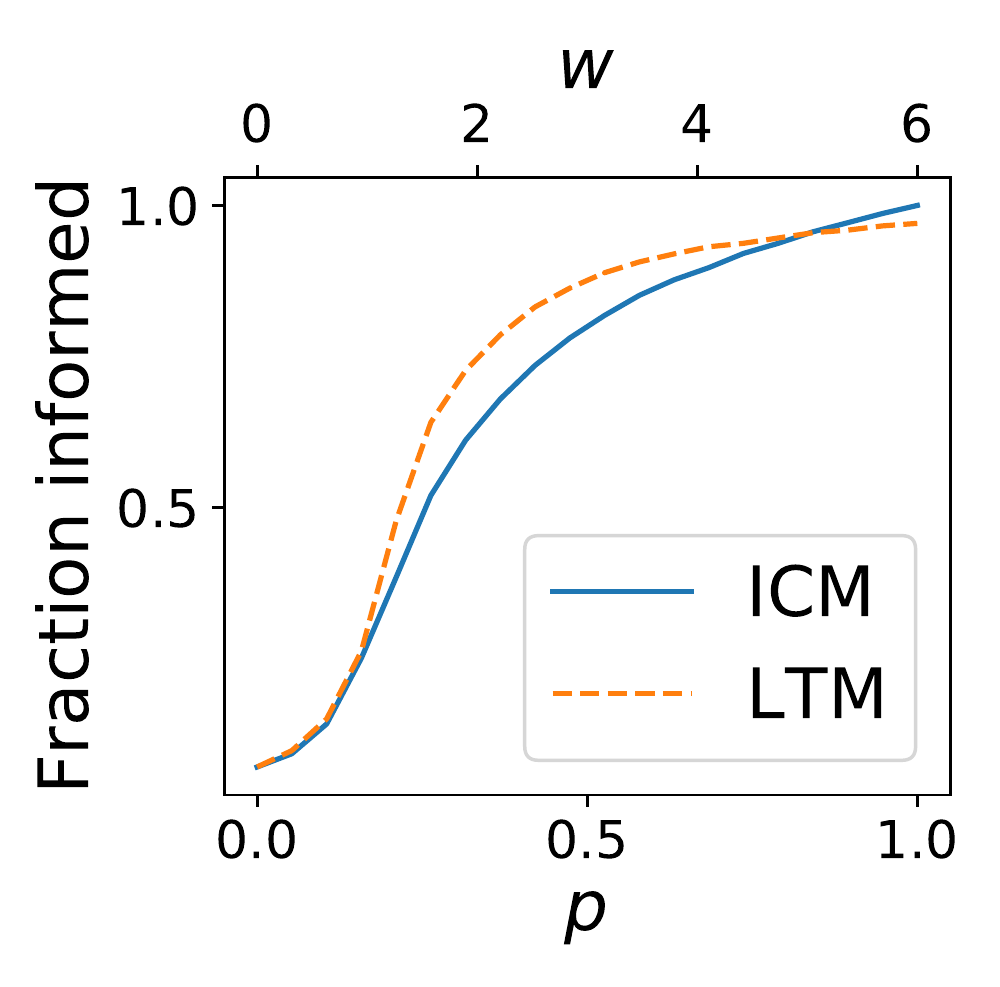}
        \includegraphics[height=1.05in]{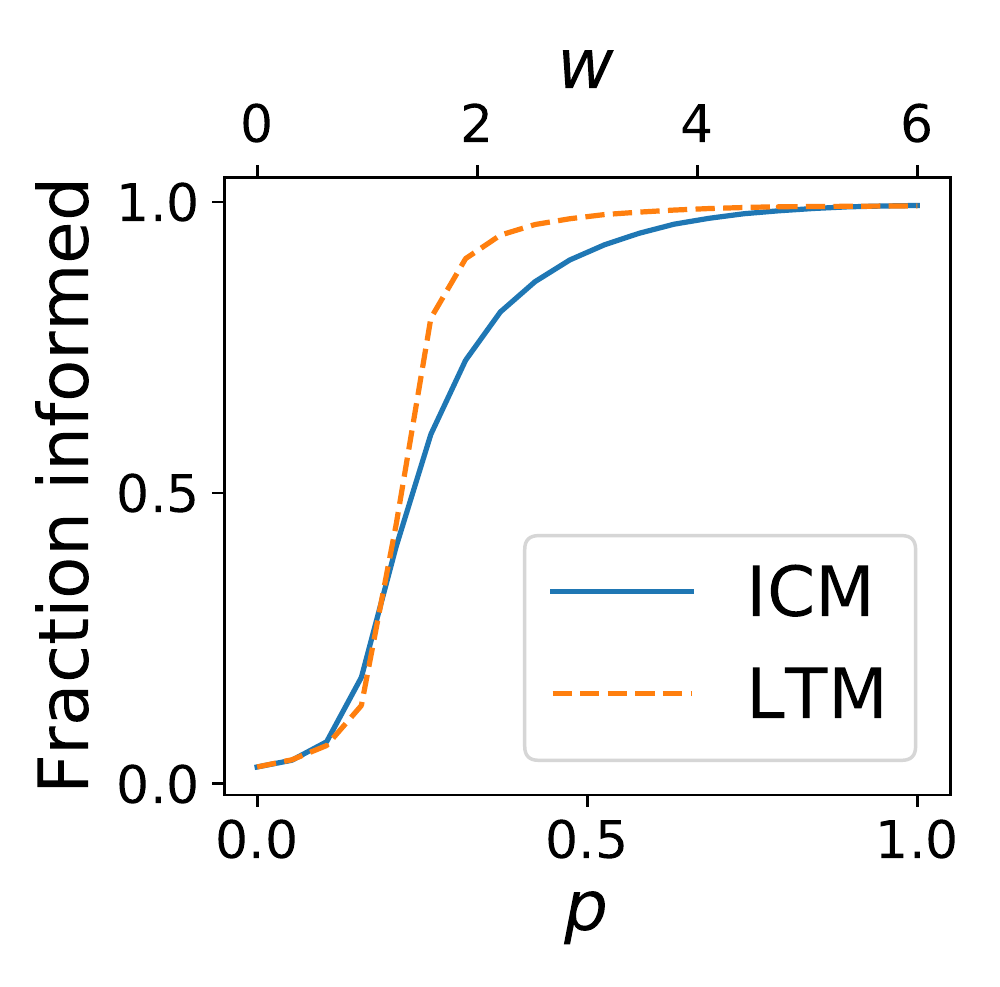}
        \includegraphics[height=1.05in]{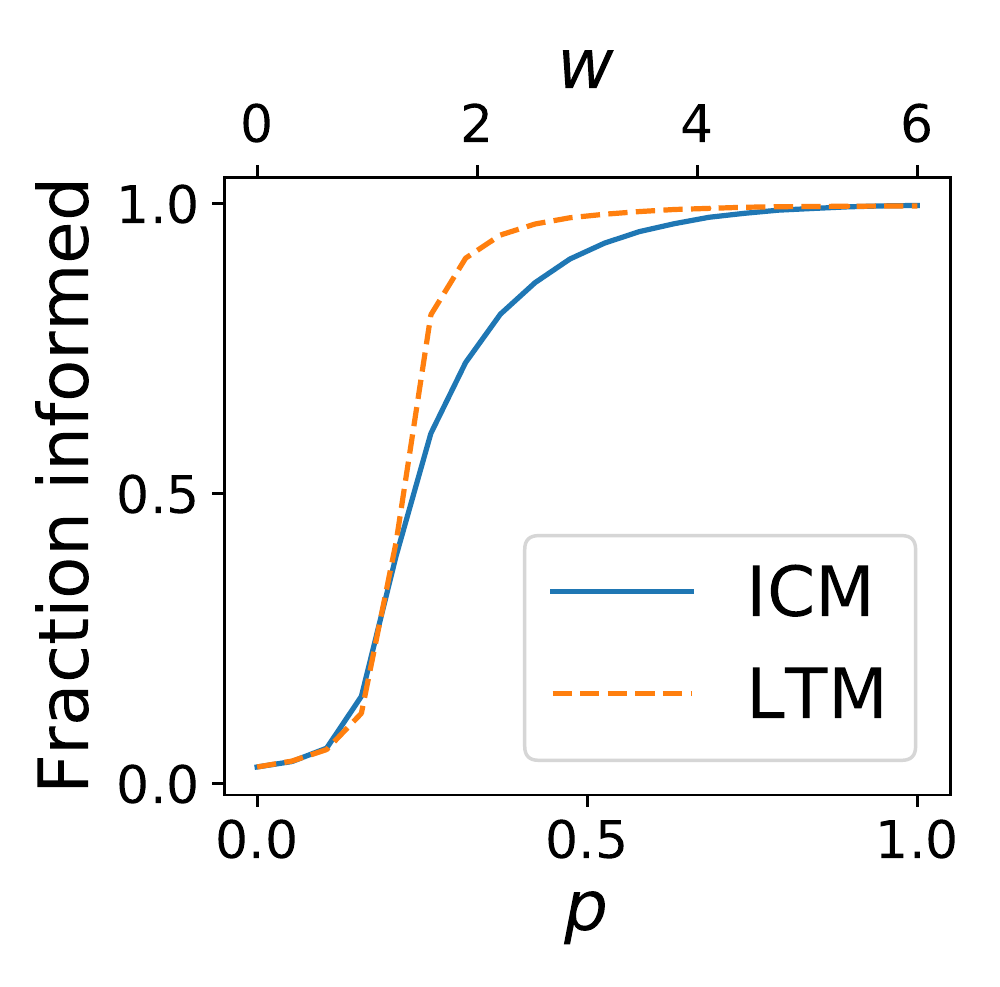}
        \caption{Fraction of all nodes informed under ICM and LTM with varying parameters. Networks left to right: Homeless, India, SBM.}
        \label{fig:parameter}
    \end{subfigure}
\end{figure}

One explanation for this phenomenon points to the community structure common in social networks. Algorithms for influence maximization under the ICM tend to distribute seed nodes across different communities to avoid the redundancy of seeding the same community multiple times. But seeding according to the AJM results in a similar recommendation to ensure diversity among seed nodes. Hence, attaining high influence spread is  insufficient for identifying a model as the true diffusion mechanism. On the one hand, achieving comparable final magnitudes of influence spread is a handy property for influence maximization tasks, as it suggests that high-quality results are attainable even when the true model is uncertain. However, many important influence maximization tasks require a descriptively accurate diffusion model, not just one that works by coincidence. 

Next, in Figure \ref{fig:parameter}, we show that common diffusion models are capable of reproducing \emph{any} observed level of total influence spread. Each plot gives the fraction of the graph influenced by a random set of 10 seed nodes under the ICM and LTM as we vary a parameter for each model. For the ICM, we vary the propagation probability $p$. For the LTM, we assign each edge $(u,v)$ a weight $\frac{w}{deg(v)}$ and vary $w$. Each line is an average of 30 draws of the random seeds. Under both models, any level of influence can be explained by a parameter choice in either of the models. We conclude that even if a given model exactly replicates the observed amount of influence spread in a network, this provides \emph{no evidence} that the model truly describes the underlying diffusion process. Hence, we must use a finer-grained assessment such as node-level activations to produce accurate diffusion models.

\textbf{Toward Node-Level Influence Spread } The prevailing methodology's blind spot to node-level information spread also engenders severe limitations in the actual deployment of the influence maximization problem in the real world. Seeding networks to maximize the scalar volume of influence spread is unproblematic when one is agnostic about \textit{who} is influenced. 
But in many sustainability domains, influence maximization in the field is not a crude game of coverage, and the individual identities or attributes of influenced nodes carry import. In the Introduction, we referenced the variety of network interventions that seek to target particular individuals or sub-populations. Here, a single metric capturing the magnitude of influence spread achieved is insufficient in determining the success of a network intervention. Thus even when in cases when seeding according to the AJM rather than the ICM or LTM would result in similar levels of influence spread, social programs would prefer seeding according to the model that makes accurate node-level predictions as it will also be better equipped to target specific individuals. We have shown that in these cases, the AJM significantly outperforms leading models.

\section{Discussion}
A breadth of research has investigated information diffusion on online networks, but the problem of influence maximization remains under-explored on natural networks. By analyzing node-level data from the first ever large-scale study of influence maximization on physical social networkw, we show that neither of the prevailing models of information diffusion--the Independent Cascade and Linear Threshold--could account for the empirical findings. Even after fitting the best of these models to the data, they \textit{perform worse than a random classifier} in predicting a node's influence status.    



We approached the shortcomings of the dominant models with an open mind to related research that may inform our understanding of information diffusion in this domain. Our proposed Activation Jump Model (AJM) draws from a lineage of work in multi-agent systems and social network theory that suggests that 1) social exchange need not only occur along network ties and 2) an individual's influencing behavior is affected by her surrounding community. Of particular note, we model seed set structural diversity as conferring benefits to each node's influencing level. 

The AJM is a more inclusive model of diffusion and superior to leading models in its predictive prowess. When validated on three real-world networks with information diffusion data, the AJM issues predictions of node-level influence spread that \textit{improve upon the best ICM and LTM predictions by 60\% to 110\%}. Moreover, as the AJM is submodular and non-monotone, we adopt a seeding algorithm that achieves a $\frac{1}{4}$-approximation to the optimal influence spread. Thus high-efficacy influence maximization under the AJM is computationally ready to be deployed in the real world. 

It has long been accepted in the social sciences that the link between individual and group social behaviors is bidirectional \cite{mead1934mind}. A multi-agent team perspective is thus particularly suited to describe peer-to-peer information diffusion in the natural world, where a group's social dynamic impacts how individual members will behave in spreading information. By modeling team-formation, a central component of many network interventions, the AJM significantly updates the influence maximization problem for natural world settings. Its framework, with flexible activation level and landing distribution functional forms, also allows for contextually relevant information such as node-specific attributes like gender and ethnicity to be incorporated when deployed in real-world network applications.

\bibliographystyle{aaai}

\bibliography{AJM-bibliography}

\begin{thebibliography}{}

\bibitem[\protect\citeauthoryear{Agmon and Stone}{2012}]{agmon2012leading}
Agmon, N., and Stone, P.
\newblock 2012.
\newblock Leading ad hoc agents in joint action settings with multiple
  teammates.
\newblock In {\em Proceedings of the 11th International Conference on
  Autonomous Agents and Multiagent Systems-Volume 1},  341--348.
\newblock International Foundation for Autonomous Agents and Multiagent
  Systems.

\bibitem[\protect\citeauthoryear{Balch}{2000}]{balch2000hierarchic}
Balch, T.
\newblock 2000.
\newblock Hierarchic social entropy: An information theoretic measure of robot
  group diversity.
\newblock {\em Autonomous robots} 8(3):209--238.

\bibitem[\protect\citeauthoryear{Banerjee \bgroup et al\mbox.\egroup
  }{2013}]{banerjee2013diffusion}
Banerjee, A.; Chandrasekhar, A.~G.; Duflo, E.; and Jackson, M.~O.
\newblock 2013.
\newblock The diffusion of microfinance.
\newblock {\em Science} 341(6144):1236498.

\bibitem[\protect\citeauthoryear{Broadhead \bgroup et al\mbox.\egroup
  }{1998}]{broadhead1998harnessing}
Broadhead, R.~S.; Heckathorn, D.~D.; Weakliem, D.~L.; Anthony, D.~L.; Madray,
  H.; Mills, R.~J.; and Hughes, J.
\newblock 1998.
\newblock Harnessing peer networks as an instrument for aids prevention:
  results from a peer-driven intervention.
\newblock {\em Public health reports} 113(Suppl 1):42.

\bibitem[\protect\citeauthoryear{Butts}{2003}]{butts2003network}
Butts, C.~T.
\newblock 2003.
\newblock Network inference, error, and informant (in) accuracy: a bayesian
  approach.
\newblock {\em social networks} 25(2):103--140.

\bibitem[\protect\citeauthoryear{Chen, Wang, and
  Yang}{2009}]{chen2009efficient}
Chen, W.; Wang, Y.; and Yang, S.
\newblock 2009.
\newblock Efficient influence maximization in social networks.
\newblock In {\em Proceedings of the 15th ACM SIGKDD international conference
  on Knowledge discovery and data mining},  199--208.
\newblock ACM.

\bibitem[\protect\citeauthoryear{DeGroot}{1974}]{degroot1974reaching}
DeGroot, M.~H.
\newblock 1974.
\newblock Reaching a consensus.
\newblock {\em Journal of the American Statistical Association}
  69(345):118--121.

\bibitem[\protect\citeauthoryear{Fawcett}{2006}]{fawcett2006introduction}
Fawcett, T.
\newblock 2006.
\newblock An introduction to roc analysis.
\newblock {\em Pattern recognition letters} 27(8):861--874.

\bibitem[\protect\citeauthoryear{Feldman, Harshaw, and
  Karbasi}{2017}]{feldman2017greed}
Feldman, M.; Harshaw, C.; and Karbasi, A.
\newblock 2017.
\newblock Greed is good: Near-optimal submodular maximization via greedy
  optimization.
\newblock {\em arXiv preprint arXiv:1704.01652}.

\bibitem[\protect\citeauthoryear{Gaston and DesJardins}{2003}]{gaston2003team}
Gaston, M., and DesJardins, M.
\newblock 2003.
\newblock Team formation in complex networks.
\newblock In {\em Proceedings of the 1st NAACSOS Conference}.

\bibitem[\protect\citeauthoryear{Gaston, Simmons, and
  DesJardins}{2004}]{gaston2004adapting}
Gaston, M.; Simmons, J.; and DesJardins, M.
\newblock 2004.
\newblock Adapting network structure for efficient team formation.
\newblock In {\em Proceedings of the AAAI 2004 fall symposium on artificial
  multi-agent learning}.

\bibitem[\protect\citeauthoryear{Goyal, Bonchi, and
  Lakshmanan}{2010}]{goyal2010learning}
Goyal, A.; Bonchi, F.; and Lakshmanan, L.~V.
\newblock 2010.
\newblock Learning influence probabilities in social networks.
\newblock In {\em Proceedings of the third ACM international conference on Web
  search and data mining},  241--250.
\newblock ACM.

\bibitem[\protect\citeauthoryear{Goyal, Bonchi, and
  Lakshmanan}{2011}]{goyal2011data}
Goyal, A.; Bonchi, F.; and Lakshmanan, L.~V.
\newblock 2011.
\newblock A data-based approach to social influence maximization.
\newblock {\em Proceedings of the VLDB Endowment} 5(1):73--84.

\bibitem[\protect\citeauthoryear{Granovetter}{1978}]{granovetter1978threshold}
Granovetter, M.
\newblock 1978.
\newblock Threshold models of collective behavior.
\newblock {\em American journal of sociology} 83(6):1420--1443.

\bibitem[\protect\citeauthoryear{Hong and Page}{2004}]{hong2004groups}
Hong, L., and Page, S.~E.
\newblock 2004.
\newblock Groups of diverse problem solvers can outperform groups of
  high-ability problem solvers.
\newblock {\em Proceedings of the National Academy of Sciences of the United
  States of America} 101(46):16385--16389.

\bibitem[\protect\citeauthoryear{Kalafat and
  Elias}{1994}]{kalafat1994evaluation}
Kalafat, J., and Elias, M.
\newblock 1994.
\newblock An evaluation of a school-based suicide awareness intervention.
\newblock {\em Suicide and Life-Threatening Behavior} 24(3):224--233.

\bibitem[\protect\citeauthoryear{Kempe, Kleinberg, and
  Tardos}{2003}]{kempe2003maximizing}
Kempe, D.; Kleinberg, J.; and Tardos, {\'E}.
\newblock 2003.
\newblock Maximizing the spread of influence through a social network.
\newblock In {\em Proceedings of the ninth ACM SIGKDD international conference
  on Knowledge discovery and data mining},  137--146.
\newblock ACM.

\bibitem[\protect\citeauthoryear{Krackhardt and
  Kilduff}{1999}]{krackhardt1999whether}
Krackhardt, D., and Kilduff, M.
\newblock 1999.
\newblock Whether close or far: Social distance effects on perceived balance in
  friendship networks.
\newblock {\em Journal of personality and social psychology} 76(5):770.

\bibitem[\protect\citeauthoryear{Leskovec \bgroup et al\mbox.\egroup
  }{2007}]{leskovec2007cost}
Leskovec, J.; Krause, A.; Guestrin, C.; Faloutsos, C.; VanBriesen, J.; and
  Glance, N.
\newblock 2007.
\newblock Cost-effective outbreak detection in networks.
\newblock In {\em Proceedings of the 13th ACM SIGKDD international conference
  on Knowledge discovery and data mining},  420--429.
\newblock ACM.

\bibitem[\protect\citeauthoryear{Liemhetcharat and
  Veloso}{2012}]{liemhetcharat2012modeling}
Liemhetcharat, S., and Veloso, M.
\newblock 2012.
\newblock Modeling and learning synergy for team formation with heterogeneous
  agents.
\newblock In {\em Proceedings of the 11th International Conference on
  Autonomous Agents and Multiagent Systems-Volume 1},  365--374.
\newblock International Foundation for Autonomous Agents and Multiagent
  Systems.

\bibitem[\protect\citeauthoryear{Marcolino, Jiang, and
  Tambe}{2013}]{marcolino2013multi}
Marcolino, L.~S.; Jiang, A.~X.; and Tambe, M.
\newblock 2013.
\newblock Multi-agent team formation: diversity beats strength?
\newblock In {\em IJCAI}, volume~13,  279--285.

\bibitem[\protect\citeauthoryear{Mead}{1934}]{mead1934mind}
Mead, G.~H.
\newblock 1934.
\newblock {\em Mind, self and society}.
\newblock Chicago University of Chicago Press.

\bibitem[\protect\citeauthoryear{Provost \bgroup et al\mbox.\egroup
  }{1998}]{provost1998case}
Provost, F.~J.; Fawcett, T.; Kohavi, R.; et~al.
\newblock 1998.
\newblock The case against accuracy estimation for comparing induction
  algorithms.
\newblock In {\em ICML}, volume~98,  445--453.

\bibitem[\protect\citeauthoryear{Schelling}{1978}]{schelling1978micromotives}
Schelling, T.
\newblock 1978.
\newblock Micromotives and macrobehavior. nueva york.

\bibitem[\protect\citeauthoryear{Shafie}{2015}]{shafie2015multigraph}
Shafie, T.
\newblock 2015.
\newblock A multigraph approach to social network analysis.
\newblock {\em Journal of Social Structure} 16:0\_1.

\bibitem[\protect\citeauthoryear{Ugander \bgroup et al\mbox.\egroup
  }{2012}]{ugander2012structural}
Ugander, J.; Backstrom, L.; Marlow, C.; and Kleinberg, J.
\newblock 2012.
\newblock Structural diversity in social contagion.
\newblock {\em Proceedings of the National Academy of Sciences}
  109(16):5962--5966.

\bibitem[\protect\citeauthoryear{Valente}{2012}]{valente2012network}
Valente, T.~W.
\newblock 2012.
\newblock Network interventions.
\newblock {\em Science} 337(6090):49--53.

\bibitem[\protect\citeauthoryear{Wang \bgroup et al\mbox.\egroup
  }{2013}]{wang2013learning}
Wang, Y.; Shen, H.-W.; Liu, S.; and Cheng, X.-Q.
\newblock 2013.
\newblock Learning user-specific latent influence and susceptibility from
  information cascades.
\newblock {\em arXiv preprint arXiv:1310.3911}.

\bibitem[\protect\citeauthoryear{Yadav \bgroup et al\mbox.\egroup
  }{2017}]{yadav2017influence}
Yadav, A.; Wilder, B.; Rice, E.; Petering, R.; Craddock, J.; Yoshioka-Maxwell,
  A.; Hemler, M.; Onasch-Vera, L.; Tambe, M.; and Woo, D.
\newblock 2017.
\newblock Influence maximization in the field: The arduous journey from
  emerging to deployed application.
\newblock In {\em Proceedings of the 16th Conference on Autonomous Agents and
  MultiAgent Systems},  150--158.
\newblock International Foundation for Autonomous Agents and Multiagent
  Systems.

\bibitem[\protect\citeauthoryear{Zhou, Zha, and Song}{2013}]{zhou2013learning}
Zhou, K.; Zha, H.; and Song, L.
\newblock 2013.
\newblock Learning triggering kernels for multi-dimensional hawkes processes.
\newblock In {\em Proceedings of the 30th International Conference on Machine
  Learning (ICML-13)},  1301--1309.

\end{thebibliography}
\end{document}